\newcommand{\ks}{\mathcal{S}}
\newcommand{\app}{\mathcal{W}}
\newcommand{\appks}{\mathcal{A}}
\title{Extending FO(ID) with Knowledge Producing Definitions: Preliminary Results}
\author{Joost Vennekens\inst{1} \and Marc Denecker\inst{2}}
\institute{
Campus De Nayer, Lessius Mechelen, KU Leuven\\
De Nayerlaan 5\\
2860 Sint-Katelijne-Waver, Belgium
\and 
Dept. Computerscience, KU leuven\\
Celestijnenlaan 200A\\
3001 Heverlee, Belgium
}
\titlerunning{Extending FO(ID) with knowledge producing definitions: preliminary results}
\authorrunning{J.~Vennekens and M.~Denecker}
\begin{document}

\setcounter{page}{161}

\maketitle

\begin{abstract}
Previous research into the relation between ASP and classical logic has identified at least two different ways in which the former extends the latter.  First, ASP program typically contain sets of rules that can be naturally interpreted as inductive definitions, and the language FO(ID) has shown that such inductive definitions can elegantly be added to classical logic in a modular way.  Second, there is of course also the well-known epistemic component of ASP, which was mainly emphasized in the early papers on stable model semantics.  To investigate whether this kind of knowledge can also, and in a similarly modular way, be added to classical logic, the language of Ordered Epistemic Logic was presented in recent work.  However, this logic views the epistemic component as entirely separate from the inductive definition component, thus ignoring any possible interplay between the two.  In this paper, we present a language that extends the inductive definition construct found in FO(ID) with an epistemic component, making such interplay possible. The eventual goal of this work is to discover whether it is really appropriate to view the epistemic component and the inductive definition component of ASP as two separate extensions of classical logic, or whether there is also something of importance in the combination of the two.
\end{abstract}

\section{Introduction}

Today, Answer Set Programming (ASP) is a vibrant domain, boasting both mature technologies and successful real-world applications. The roots of ASP lie in the fields of Logic Programming (LP) and Non-monotonic Reasoning (NMR).  Both of these were initially motivated by dissatisfaction with classical first-order logic (FO), be it its computational properties (in the case of LP) or its suitability for representing common-sense knowledge (in the case of NMR). The current success of ASP suggests that, to a large extent, this domain was indeed able to overcome these problems of classical logic. 

It is, however, not yet quite clear {\em how} precisely this was done.  That is to say, the relation between ASP and classical logic is, in our opinion, not yet fully understood. Currently, ASP still stands as an {\em alternative to} FO: to effectively write ASP programs, one basically has to leave behind all methodologies, tools and intuitive understandings of classical logic and start anew in a different setting. This paper is part of a research project that attempts to close this gap \cite{MG65/DeneckerVVWB10}.  The aim is to investigate whether and how the achievements of ASP can be reformulated as modular improvements or extensions of classical logic.  Ultimately, we would like to be able to characterize ASP as a set of specific solutions to a number of orthogonal problems/limitations of classical logic, such that  someone working in classical logic could add as many or as few ``ASP-style features'' to his knowledge base as is needed for that particular application.  Of course, the motivation for this research is not purely practical. By reformulating the contributions of ASP in the classical framework, we also hope to provide a synthesis that will eventually lead to an increased understanding of classical and computational logic, and their role in problem solving.

Ironically, ASP's relation to classical logic seems currently best understood when it comes to computational aspects.  For instance, \cite{MitchellT05} showed that the typical ASP way of encoding search problems can be captured quite elegantly in a classical context by the notion of {\em modal expansion}: given a theory $T$ in an alphabet $\Sigma$ and an interpretation $I_o$ for some subvocabulary $\Sigma_o \subseteq \Sigma$, find an interpretation $I$ that extends $I_o$ to the entire vocabulary such that $I \models T$. Indeed, the 2011 edition of the ASP-competition \cite{lpnmr/DeneckerVBGT09} has had two modal expansion systems for (extensions of) classical logic among its competitors: Enfragmo \cite{aavani12} and IDP \cite{lash06/MarienWD06}. On a more technical level, also the similarities between current ASP solvers and SAT solvers are of course well understood (e.g., \cite{Lierler11,gebser,Giunchiglia}).

When it comes to the knowledge representation properties of ASP (i.e., the intuitive meaning of expressions and the modeling methodologies that have to be followed), the relation to classical logic is less clear.  As pointed out by \cite{Denecker12}, one of the key problems lies in the interpretation of the semantic structures: whereas an answer set in ASP is traditionally interpreted in an epistemic way, as a representation of the knowledge of some rational agent, classical logic is based on the Tarskian view of a model as a representation of a possible objective state of affairs.  Nevertheless, a series of papers by Denecker et al.~\cite{tocl/DeneckerBM01,lpnmr/DeneckerV07,tocl/DeneckerT08} has shown that a substantial portion of ASP can be understood as a combination of classical FO axioms and {\em inductive definitions}. 

An inductive definitions is a well-understood mathematical construct, that is usually represented by a set of natural language {\em if-then}-statements.  As shown in \cite{tocl/DeneckerBM01,lpnmr/DeneckerV07,tocl/DeneckerT08}, we can view a set of normal logic programming rules as a formal representation of such an inductive definition.  For instance, the following pair of rules:
\[
\left\{
\begin{aligned}
T(x,y) &\leftarrow T(x,z) \land T(z,y).\\
T(x,y) &\leftarrow E(x,y).
\end{aligned}
\right\}
\]
define $T$ as the transitive closure of $E$. This is of course not overly surprising and it indeed goes back to the views of Van Gelder \cite{VanGelder93} and Clark \cite{adbt/Clark78}.  Nevertheless, taking this observation seriously immediately suggests a clean and well-defined ``plugin'' that can modularly add an ASP-style component to a classical FO theory.  Indeed, because an inductive definition is nothing more than a generalization of the way in which relations are usually (non-inductively)  defined by means of an FO equivalence, there should be nothing problematic (either conceptually or mathematically) about allowing such ``sets of rules that form an inductive definition''
 anywhere one is allowed to use an equivalence. FO(ID) (previously known as ID-logic) is the language that does precisely this: it extends FO, in a completely modular way, with a rule-based representation for inductive definition \cite{Denecker:CL2000}.  Representing a search problem as a model expansion problem in FO(ID) often yields results that are almost identical to a Generate-Define-Test program in ASP \cite{Lifschitz/AI02}, apart from minor syntactic details \cite{jelia/MarienGD04,Denecker12}. In this way, FO(ID) therefore fits nicely into our stated goal, by identifying one concrete way in which ASP improves upon FO and packing this up in a language construct that can be added to an existing FO knowledge base at will.

While FO(ID) seems able to naturally represent already a surprisingly large part of existing ASP practice, it by no means covers everything.  One class of examples that remains out of scope is that of the epistemic examples that originally motivated the stable model semantics \cite{iclp/GelfondL88,GelfondL90}. This is epitomized by the well-known interview example \cite{GelfondL91}, which is expressed in ASP as:
\begin{align*}
Eligible(x) &\leftarrow HighGPA(x).\\
Eligible(x) &\leftarrow Minority(x), FairGPA(x).\\
\lnot Eligible(x) & \leftarrow \lnot FairGPA(x).\\
Interview(x) & \leftarrow not~Eligible(x), not~\lnot Eligible(x).
\end{align*}
Recent efforts have attempted to reformulate also this kind of example in a way that explains its relation to FO.  In \cite{VlaeminckVBD/KR2012}, the language of {\em Ordered Epistemic Logic} (OEL) is developed for this.   Here, an ordered set of FO theories is considered, and each theory $T$ is allowed to make use of modal operators $K_{T'}$ that refer to the knowledge entailed by a theory $T' < T$.  The interview example would consist of two theories $T_1,T_2$ with $T_1 < T_2$, such that $T_1$ is a normal FO knowledge base containing facts about $Minority$, $HighGPA$ and $FairGPA$, and a definition of $Eligible$. The theory $T_2$ consists of the single equivalence:
\[
\forall x\ Interview(x) \Leftrightarrow \lnot K_{T_1} Eligible(x) \land \lnot K_{T_1} Eligible(x).
\]
This logic extends FO in a way which is completely orthogonal to FO(ID). There is nothing to prevent the different knowledge bases of an OEL theory from containing, in addition to regular FO formulas, also inductive definitions, but if they do, there is no interplay with the epistemic operators.  In other words, FO(ID) and OEL both isolate one particular non-classical aspect of ASP, and  show how it can be modularly added to FO, but they do so independently.  This presupposes that there is nothing of importance in ASP's {\em combination} of epistemic and ``inductive definition'' reasoning.  However, it does not seem {\em a priori} obvious that this is the case.

In this paper, we will therefore investigate how an epistemic component can be added to the inductive definition construct of FO(ID) itself. The key idea here is to allow both a relation {\em and} an agent's knowledge about this relation to be defined together in a single {\em knowledge producing definition}, as we will call it.  The semantics of such a knowledge producing definition is defined by a constructive process that creates, in parallel, the relations that are being defined and the agent's knowledge about them.  In this way, we obtain a language in which, unlike the FO(ID)+OEL approach, interaction between the epistemic and definitional component is possible.  The hope is that such a language might shed more light on the epistemic component of ASP and its relation to both classical FO and the inductive definitions of FO(ID).

The work presented in this paper is still at a preliminary stage, but we will attempt to sketch interesting avenues for future research.  Because one of our main design goals is to make our approximate knowledge structures integrate seamlessly with the inductive definition construct as it already exists in FO(ID), we will need to spend some time recalling the details of this, before we can develop our extension.

\section{Preliminaries: the semantics of inductive definitions}

Inductive definitions are constructive. In mathematical texts, they are typically represented by a set of {\em if-then} statements, which may be applied to add new elements to the relation(s) that is (or {\em are}, in the case of a definition by simultaneous induction) being defined.  The formal representation of such a definition in FO(ID) is by a set of rules of the form 
\begin{equation}\label{defrule}
\forall \vec{x}\ P(\vec{t}) \leftarrow \phi,
\end{equation}
where $P(\vec{t})$ is an atom and $\phi$ an FO formula.For monotone definitions, the relation being defined is simply the least relation that is closed under application of the rules, and it can be constructed by exhaustively applying them.  For non-monotone definitions, the relation being defined is no longer the least relation closed under the rules, and there may, in fact, be many minimal relations closed under the rules instead of a single least one.  In mathematical texts, such a non-monotone definition is always accompanied by a well-founded order over which the induction is performed.  While the characterization as a least set breaks down, the constructive characterization still works: the defined relation can still be constructed by repeated application of the rules, provided that these rules are applied in an order that respects the given well-founded order of induction.  For instance, the standard definition of satisfaction in classical logic is a definition over the subformula order, which means that we may only apply a rule that derives that $I \models \phi$ for some $\phi$ {\em after} all rules that could derive $I \models \psi$ for a subformula $\psi$ of $\phi$ have been applied. 

Of course, for a correct inductive definition, it is important that the structure of the rules also respects the well-founded order.  For instance, in an definition over the subformula order, it makes no sense for a rule to define whether a formula is satisfied in terms of the satisfaction of a {\em larger} formula.  Therefore, the structure of the rules and the well-founded order are not independent.  In fact, the well-founded order is already entirely implicit in the structure of the rules!  As shown in \cite{lpnmr/DeneckerV07} and a series of prior papers, the well-founded semantics (WFS) \cite{GelderRS91} can actually  be seen as a mathematical construct to recover the well-founded order from the structure of the rules.

For simplicity, in the technical material of this paper, we will restrict attention to {\em ground} formulas only.

Since its original definition, a number of alternative ways of defining the WFS have been developed.  One of these is to start from the following method of evaluating a formula $\phi$ in a pair of interpretations $(I,J)$: 
\begin{itemize}
\item For an atom $P(\vec{t})$, $(I,J)\models P(\vec{t})$ iff $I \models P(\vec{t})$,
\item For a formula $\lnot \psi$, $(I,J) \models  \lnot\psi$ iff $(J,I)\not \models  \psi$,
\item For a formula $\psi_1 \lor\psi_2$, $(I,J) \models  \psi_1 \lor\psi_2$ iff $(I,J) \models  \psi_1$ or $(I,J) \models  \psi_2$,
\item For a formula $\psi_1 \land\psi_2$, $(I,J) \models  \psi_1 \land\psi_2$ iff $(I,J) \models  \psi_1$ and $(I,J) \models  \psi_2$.
\end{itemize}
The crux of this definition lies in the case for negation, which switches the roles of $I$ and $J$, thus ensuring that positive occurrences of atoms are evaluated in $I$ and negative occurrences in $J$.  We will call a pair $(I,J)$ for which $I \leq J$ an {\em approximating pair}, because it can be seen as an approximation of the set of interpretations $K$ such that $I \leq K$ and $K \leq J$.  Indeed, if $(I,J) \models \phi$, according to the above definition, then $K \models \phi$ for all such $K$.  Moreover, if $K \models \phi$ for at least one such $K$, then $(J,I) \models \phi$.  For pairs $(I,J)$ such that $I=J$, the evaluation $(I,J)\models\phi$ reduces to classical satisfaction $I \models \phi$.  Pairs for which this is the case are called {\em total}. 

The WFS can then be defined as the unique limit of a sequence of pairs of interpretations $(I_i,J_i)_{i \geq 0}$. This sequence starts from the least precise pair of interpretations $(\bot_\Sigma,\top_\Sigma)$, where $\bot_\Sigma$ is the interpretation in which all atoms is false and $\top_\Sigma$ the interpretation in which all atoms are true. There are then two acceptable ways of going from $(I_i,J_i)$ to $(I_{i+1},J_{i+1})$:
\begin{itemize}
\item Either $J_{i+1} = J_i$ and $I_{i+1}$ is the union $I_i + \{P(\vec{t})\}$, with $P(\vec{t})$ an atom for which there exists a rule of the form $P(\vec{t})\leftarrow \phi$ with $(I_i,J_i)\models \phi$; 
\item Or $I_{i+1} = I_i$ and $J_{i+1}$ is such that $I_i \leq J_{i+1} \leq J_i$ and for all atoms $P(\vec{t})$ in the set difference $J_i - J_{i+1}$ and all rules $P(\vec{t})\leftarrow \phi$, $(J_{i+1},I_i)\not \models \phi$ 
\end{itemize}
Intuitively, the first of these two cases allows us to derive the head of a rule once it is certain that its body is satisfied (in the sense that $K\models \phi$ for all $K$ approximated by $(I_i,J_i)$).  The second case allows us to assume that a set of atoms must all be false if this assumption would make us certain that all bodies $\phi$ of rules with one of these atoms in the head are false (i.e., $K \not \models \phi$ for all $K$ approximated by $(I_i,J_{i+1})$). The set $J_{i+1}-J_i$ of atoms that are falsified in this operation is known as an {\em unfounded set}.

A sequence constructed in this way is called an {\em induction sequence}. The well-founded model  (WFM) is now precisely the unique limit $(V,W)$ to which all such induction sequences converge.  If this WFM is total (i.e., $V=W$), then the definition completely determines the extension of the predicates it defines.  Clearly, this is a desirable property for an inductive definition. Therefore, FO(ID) allows only total models.

So far, we have tacitly assumed that all predicates in the vocabulary $\Sigma$ are defined by the definition. In mathematics, however, this is rarely the case, since most definitions serve to define some relation(s) {\em in terms of} some other relation(s).  This is also possible in FO(ID).  For a definition $\Delta$ (i.e., a set of rules of form \eqref{defrule}), the predicate symbols appearing in the head of at least one of these rules are called the {\em defined predicates} of $\Delta$. The set of all defined predicates is denoted as $Def(\Delta)$.  The remaining predicates (i.e., those that belong to $\Sigma - Def(\Delta)$) are called {\em open}, and the set of all such predicates is denoted by $Op(\Delta)$.  The purpose of a definition is then to characterize the defined predicates $Def(\Delta)$ in terms of the open predicates $Op(\Delta)$.  Formally, this is done by parametrizing the construction process by an interpretation for the open predicates: for an interpretation $O$ of $Op(\Delta)$, an induction sequence {\em given $O$} is defined as a sequence of interpretation $(I_i,J_i)_{i \geq 0}$, in which all interpretations $I_i$ and $J_i$ extend the given interpretation $O$.  The starting point of this sequence is the pair $(I_0,J_0)$ such that $I_0 = O + \bot_{Def(\Delta)}$ and $J_0 = O + \top_{Def(\Delta)}$. 

An interpretation $I$ is then called a {\em model} of a definition $\Delta$, denoted $I \models \Delta$, if the unique limit of the induction sequences for $\Delta$ given $I\rvert_{Op(\Delta)}$ (i.e., the restriction of $I$ to the open predicates of $\Delta$) is precisely the total pair $(I,I)$. 

FO(ID) now consists of classical first-order logic FO extended with these definitions.  While some versions of this logic allow boolean combinations of classical formulas and inductive definitions, we will, for simplicity, restrict attention in this paper to FO(ID) theories  that consist of precisely one FO formula $\phi$ and one inductive definition $\Delta$. For such a theory $T = \{\phi,\Delta\}$, we define that $I$ is a {\em model} of $T$, denoted $I \models T$, iff both $I \models \phi$ (in the classical sense) and $I \models \Delta$ (as defined above). 

\section{Knowledge producing definitions}

The goal of this paper is take the concept of an inductive definition as it exists in FO(ID) and extend it by allowing definitions that not only define the objective extension of their defined predicates, but, at the same time, also define a rational agent's knowledge about these predicates.  A {\em modal literal} is a formula of the form $K\psi$ where $\psi$ is an FO formula.  By FO(K), we denote the language that extends FO by allowing modal literals to appear anywhere an atom $P(\vec{t})$ may appear.  Note that FO(K) therefore does not allow nesting of the operator $K$.  A knowledge producing definition $\kappa$ is a set of rules of either the form
\begin{equation}\label{f1}
\forall x\ K \psi \leftarrow \phi,
\end{equation}
or
\begin{equation}\label{f2}
\forall x\ P(\vec{t}) \leftarrow\phi.
\end{equation}
Here, $K \psi$ is a modal literal and $P(\vec{t})$ an atom.  In both cases, $\phi$ is an FO(K) formula. Again, in our formal treatment, we will always assume that these rules have already been appropriately grounded. The defined predicates of a knowledge producing definition $\kappa$ are all the predicates $P$ that appear in the head of a rule of form \eqref{f2}.  All other predicates, including those that appear only in the formula $\psi$ of a rule of form \eqref{f1}, are open.

For this formalism, the basic semantic structure will consist of a pair of an interpretation $I$, representing the real world, and a set of interpretations $W$, representing the agent's knowledge about the world.  We call such a pair $(I,W)$ a {\em knowledge structure} and call it {\em consistent} if $I \in W$. It is obvious how to evaluate a knowledge formula in such a knowledge structure.
\begin{definition}\label{def-one}
For a knowledge formula $\phi$ and a knowledge structure $\ks = (I,W)$, we define $\ks\models \phi$ as follows:
\begin{itemize}
\item For an atom $P(\vec{t})$, $\ks \models P(\vec{t})$ iff $I \models P(\vec{t})$,
\item For a modal literal $K\psi$, $\ks \models K\psi$ iff for each $J$ in $W$, $(J,W)\models \psi$,
\item The other cases are defined as usual.
\end{itemize}
\end{definition}

Like the WFS, which construct a single interpretation $K$ through a series of increasingly precise approximations by pairs of interpretations $(I_i,J_i)$, our semantics will construct a single knowledge structure $(I,W)$ through a series of increasingly precise approximations of it.  Part of these approximations will again be  a pair of interpretations $(I_i,J_i)$, which form an increasingly precise sequence of approximations of the real extension $K$ of the defined predicates.  At each stage of this approximating sequence, we will also keep track of the agent's knowledge about it.  Therefore, at each step $i$, we will also have a set $\app_i$ of pairs of interpretations; if $(I',J') \in \app_i$, then this means that the agent considers it possible that $(I',J')$ occurs somewhere in the real approximating sequence.  The way in which we will ensure this property, is to, on the one hand, apply the same derivation rules we apply to the real sequence to the pairs that the agent considers possible. On the other hand, when the agent's knowledge increases due to a modal literal in the head of a rule, this will eliminate some of the possibilities (i.e., some of these pairs $(I',J')$ are removed from $\app_i$), but it will not change any of the possibilities themselves (i.e., the remaining pairs do not change).

The following will serve as our basic semantic structure.
\begin{definition}
An {\em approximate knowledge structure} $\appks$ is a pair $((I,J), \app)$ of an approximating pair $(I,J)$ and a set $\app$ of approximating pairs $(I',J')$.
\end{definition}

Evaluating an FO(K) formula in such an approximate knowledge structure is again a matter of switching the approximating pairs when negation is encountered. Formally, we define that, for an approximate knowledge structure $\appks = ((I,J), \app)$,
\begin{itemize}
\item For an atom $P(\vec{t})$, $\appks \models P(\vec{t})$ iff $I \models P(\vec{t})$;
\item For a modal literal $K \psi$, $\appks \models K \psi$ iff for each $(I',J')\in\app$, $((I',J'), \app) \models \psi$;
\item For a formula $\lnot \phi$, $\appks\models \lnot \phi$ iff $((J,I),\overline{\app})\models \phi$, where $\overline{\app} = \{(J',I') \mid (I',J')\in \app\}$;
\item The other cases are defined as usual.
\end{itemize}

\newcounter{opcnt}
\stepcounter{opcnt}
\newenvironment{op}{\begin{description}\item[{Operation \theopcnt}.]}{\end{description}\stepcounter{opcnt}}

We now construct an increasingly precise sequence $(\appks_i)_{i\geq 0}$ of approximate knowledge structures.  If we project this sequence unto the approximating pair $(I,J)$ of each approximate knowledge structures $((I,J), \app)$, the result will essentially be just a regular induction sequence. The sequence again takes as input a knowledge structure $(O,\mathcal{M})$ for the open predicates, and its starting point is then the approximate knowledge structure
\[
\appks_0 = ( (O+\bot_{Def(\kappa)}, O+\top_{Def(\kappa)}), \{ (O'+\bot_{Def(\kappa)},O'+\top_{Def(\kappa)}) \mid O' \in \mathcal{M}\}).\]
We construct subsequent elements of the sequence by applying one of the following operations.
\begin{op}
$\appks_{i+1} = ((I_i + \{P(\vec{t})\}, J_i), \app_i)$ where $\appks_i =  ((I_i, J_i), \app_i)$ such that there is a rule $r$ of the form $P(\vec{t})\leftarrow \phi$ with $((I_i,J_i),\app_i) \models \phi$. 
\end{op}
This operation is just the obvious analogue to the first operation used in building normal induction sequences. The only difference is that the agent's knowledge $\app_i$ is dragged along as an additional argument, which is used to evaluate occurrences of modal literals in the rule bodies.
\begin{op}
$\appks_{i+1} = ((I_i, J_i), \app_{i+1})$ where $\appks_i =  ((I_i, J_i), \app_i)$ and $\app_{i+1} = \app_i - \{(I_i',J_i')\} + \{(I_{i}' + \{P(\vec{t})\},J_i)\}$ such that there is a rule $r$ of the form $P(\vec{t})\leftarrow \phi$ with $((I_i',J_i'),\app_i)\models \phi$.
\end{op}
This operation is essentially the same as the previous one, with the only difference being that it is now not applied to the approximating pair $(I_i,J_i)$, but to one of the approximating pairs $(I_i',J_i')$ in $\app_i$. Where the previous two operations are analogous to the production operation of the normal induction sequence, the next two mimic the unfounded set operation. 
\begin{op} $\appks_{i+1} = ((I_i, J_{i+1}), \app_i)$ where $\appks_i =  ((I_i, J_i), \app_i)$ 
and $J_{i+1}$ is such that $I_i \leq J_{i+1} \leq J_i$ and for all atoms $P(\vec{t}) \in J_i - J_{i+1}$ and all rules $r$ of the form $P(\vec{t})\leftarrow \phi$, it holds that $(J_{i+1},I_i)\not \models \phi$. 
\end{op}
Again, this operation can either be performed  on the approximating pair $(I_i,J_i)$, as above, or on one of the pairs in the set $\app_i$, as below:
\begin{op}
Or $\appks_{i+1} = ((I_i, J_i), \app_{i+1})$ where $\appks_i =  ((I_i, J_i), \app_i)$ and $\app_{i+1} = \app_i - \{(I_i',J_i')\} + \{(I_{i}',J_{i+1}')\}$  and  $J_{i+1}'$ is such that $I_i' \leq J_{i+1}' \leq J_i'$ and for all atoms $P(\vec{t}) \in J_i' - J_{i+1}'$ and all rules $r$ of the form $P(\vec{t})\leftarrow \phi$, it holds that $((J_{i+1}',I_i'),\app_i) \not \models \phi$. 
\end{op}
The final operation takes care of the effect of knowledge producing rules. 
\begin{op}
$\appks_{i+1} = ((I_i, J_i), \app_{i+1})$ where $\appks_i =  ((I_i, J_i), \app_i)$ and $\app_{i+1} = \{ (I,J) \in \app_i \mid (J,I) \models \psi\}$ and there exists a rule $r$ of the form $K\psi \leftarrow \phi$ such that $((I_i, J_i), \app_{i})\models \phi$. 
\end{op}

Note that the condition for removing a pair $(I,J)$ from $\app_i$ is that $(J,I) \not \models \psi$, i.e., that no interpretation approximated by $(I,J)$ still satisfies $\psi$.  

The semantics of a knowledge producing definition is now defined in terms of sequences that are constructed by these operations.

\begin{definition} Let $\kappa$ by a knowledge producing definition. Let $(O, \mathcal{M})$ be a knowledge structure describing the agent's initial knowledge about the open predicates.  A {\em knowledge derivation sequence} is a sequence $(\appks_i)_{i\geq 0}$ of approximate knowledge structures, starting from \[
\appks_0 = ( (O+\bot_{Def(\kappa)}, O+\top_{Def(\kappa)}), \{ (O'+\bot_{Def(\kappa)},O'+\top_{Def(\kappa)}) \mid O' \in \mathcal{M}\}),\]
such that each $\appks_{i+1}$ is obtained from $\appks_i$ by applying one of the five operations defined above and, to prevent the same operation from being applied again and again, $\appks_{i+1} \neq \appks_i$. Such a sequence is called {\em complete} if there is no way to extend it further without violating this condition.  It is called {\em sound} if every operation that is used to construct $\appks_{i+1}$ from $\appks_i$ remains applicable in all $\appks_j$ with $j > i$.  It is called {\em total} if it terminates in an approximate knowledge structure $((I,J),\app)$ such that $I=J$ and for each $(I',J')\in \app$ also $I'=J'$.
\end{definition}

The condition of totality is borrowed from FO(ID), where, as mentioned, it is used to ensure that each inductive definition correctly and completely defines the relations it sets out to define.  It is therefore also a natural requirement in our context. 

The condition of soundness is meant to avoid situations in which the sequence ends up contradicting itself, as it might for the following example: 
\[
\{K p \leftarrow \lnot K p.\}
\]
Here, the fact that the agent does not know $p$ will produce the knowledge that $p$. This is not only conceptually problematic, but also creates the practical problem that the order in which operations are applied might have an effect on the final outcome. For instance, the knowledge definition
\[
\left\{\begin{aligned}
K q \leftarrow \lnot K p.\\
K p \leftarrow \lnot K q.
\end{aligned}\right\}
\]
has both a derivation sequence that starts with the first rule and therefore ends up knowing $q$ but not knowing $p$, and one that starts with the second rule and ends up knowing $p$ but not $q$.

This can only happen with sequences that are not sound, as the following proposition shows.

\begin{proposition}
All complete and sound knowledge derivation sequences that start from the same knowledge structure $(O, \mathcal{M})$ terminate in the same approximate knowledge structure $\appks$.  
\end{proposition}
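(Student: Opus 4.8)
The plan is to prove the proposition as a confluence (Church--Rosser style) result: first that every complete sound sequence terminates, and then that any two such sequences starting from $(O,\mathcal{M})$ reach the same limit. I would fix a \emph{precision order} $\leq_p$ on approximate knowledge structures by setting $((I,J),\app) \leq_p ((I',J'),\app')$ iff $I \subseteq I'$, $J' \subseteq J$, and $\app'$ is obtained from $\app$ by deleting some pairs and refining (in the same sense) the surviving ones. Because each operation either refines or deletes a \emph{single} pair of $\app$ and never merges two of them, the pairs of $\app_i$ can be indexed by the element $O' \in \mathcal{M}$ they descend from, which makes ``surviving pair'' well defined and makes $\leq_p$ a genuine partial order. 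Each of the five operations strictly increases $\leq_p$, and the constraint $\appks_{i+1} \neq \appks_i$ forbids repetition; for a finite ground vocabulary $\leq_p$ has finite height, so every sequence terminates, and \emph{complete} then means it halts at a structure on which no operation produces any change.

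The workhorse would be a monotonicity lemma exploiting the switched semantics: for the pair component, $(I,J) \models \phi$ is monotone in $\leq_p$ for \emph{every} FO(K) formula $\phi$, not just positive ones, because evaluating positive occurrences in $I$ and negative occurrences in $J$ means that growing $I$ and shrinking $J$ can only make literals of either polarity more true. I would prove this by induction on $\phi$, the negation case reducing to the same statement applied to $\psi$ on the reversed pair (the switch $(I,J)\mapsto(J,I)$ reverses $\leq_p$). For modal literals I would use crucially that the $\psi$ in $K\psi$ is plain FO, so $\appks \models K\psi$ amounts to the \emph{pairwise} assertion that every world-pair of $\app$ satisfies $\psi$; deleting worlds (Operation 5) only drops constraints, and refining a world preserves $\psi$ by the pair lemma, so positive modal literals behave monotonically as well.

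For confluence I would take two complete sound sequences $S$ and $S'$ with limits $\appks^\ast$ and $\mathcal{B}^\ast$ and prove $\appks^\ast \leq_p \mathcal{B}^\ast$ (the converse then follows by symmetry, giving equality). The natural engine is to match the derivations of $S$ against $\mathcal{B}^\ast$: each step of $S$ adds an atom, falsifies an atom, or prunes a world, and is justified by the truth (or, for the unfounded-set and world-removal conditions, the falsity) of an FO(K) formula. For the monotone justifications---the bodies of Operations 1, 2 and the body $\phi$ of a knowledge-producing rule in Operation 5---the pair lemma transfers the justifying truth from $\appks_i$ up to $\mathcal{B}^\ast$; since $S'$ is complete, the corresponding operation cannot change $\mathcal{B}^\ast$, so the derived atom (or consequence) must already be present there, propagating $\appks_i \leq_p \mathcal{B}^\ast$ inductively.

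The hard part will be the conditions that are \emph{anti}-monotone in precision, and this is where soundness is indispensable. Two phenomena resist the naive transfer: the unfounded-set conditions of Operations 3 and 4 assert falsity of bodies at a switched pair, and---more delicately---a \emph{negated} modal literal $\lnot K\psi$ can flip from true to false when a world is refined, since refinement shrinks that world's upper bound $J'$ and $\lnot K\psi$ reads the worlds through the bar in an anti-monotone way. For neither does truth/falsity at $\appks_i$ transfer upward to the more precise $\mathcal{B}^\ast$ by monotonicity alone. Soundness of $S$ is exactly the hypothesis that forbids the harmful flips: every applied operation ``remains applicable'', so each relevant body stays false (resp.\ each $\lnot K\psi$ stays true) all the way to $\appks^\ast$. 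I expect the real work to lie in turning this persistence into a comparison between $\appks^\ast$ and $\mathcal{B}^\ast$, most cleanly via a minimal-counterexample (well-founded) induction on the first step at which $S$ and $S'$ disagree in information, together with a joinability argument reconciling the differing nondeterministic choices of $J_{i+1}$: any two sound unfounded-set choices are sub-choices of the falsity pattern at the common limit and can be merged by further unfounded-set steps. Coping with the mild self-reference of the unfounded-set condition---whose truth depends on the very $J_{i+1}$ it selects---is the crux; a tempting alternative that sidesteps the diamond reasoning is to characterize the intended limit independently as the well-founded fixpoint of an approximation operator extracted from $\kappa$, in the algebraic style of the WFS, and then show every sound sequence stays $\leq_p$ this fixpoint while every complete one reaches it.
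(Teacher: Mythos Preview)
Your plan is essentially the same confluence argument the paper sketches, carried out with more care: both pivot on (i) soundness guaranteeing that a postponed operation remains applicable, and (ii) a monotonicity observation showing that the effect of an operation is essentially order-independent, with Operation~5 isolated as the one case whose outcome depends on the stage at which it fires. The paper handles that case by a local diamond argument (postponing an Operation~5 only removes \emph{extra} pairs, which a subsequent Operation~5 will mop up), whereas you frame it globally via the precision order $\leq_p$ and a limit comparison $\appks^\ast \leq_p \mathcal{B}^\ast$; your suggested alternative of characterising the intended limit as the well-founded fixpoint of an approximation operator is not in the paper but is natural and would likely give the cleanest proof.

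One small slip worth fixing: refining a world does \emph{not} flip $\lnot K\psi$ from true to false. If $(I',J')$ is refined to $(I'',J'')$ then the swapped pair satisfies $(J'',I'')\leq_p(J',I')$, so $(J',I')\not\models\psi$ implies $(J'',I'')\not\models\psi$ by your own pair lemma, and the witness persists. What \emph{can} destroy $\lnot K\psi$ is \emph{removal} of the witnessing world via Operation~5. This does not damage your overall plan---soundness still covers it---but your diagnosis of where the anti-monotonicity lives should be adjusted accordingly.
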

\begin{proof}[sketch]
At any particular point in the derivation sequence, many operation may be applicable. We have to show that it does not matter which of these we choose.  First, because the sequence is sound, we know that even if we do not choose to apply an operation now, we will always get a chance to apply it later. Second, the effect of an operation usually does not depend on when it is executed. The only exception to this is Operation 5, because the set of approximating pairs may of course change throughout the sequence. However, the only changes to this set are that (1) pairs are removed and that (2) pairs become more precise (i.e., that either the first element $I$ of a pair $(I,J)$ becomes larger or that $J$ becomes smaller, so that fewer interpretations $K$ lie between $I$ and $J$).  Clearly, (1) is not a problem, because if a pair gets removed later on any way, it does not matter if we already remove it now or not. Also (2) is not a problem, because the condition for removing a pair $(I,J)$, namely that $(J,I)\not\models \phi$, also implies that, for any more precise pair $(I',J')$  (i.e., such that $I \leq I'$ and $J' \leq J$), it will be the case that $(J',I')\not\models \phi$ as well. The only effect of postponing the application of an Operation 5 to a later stage is therefore that we might end up removing {\em more} pairs than if we had applied it now.  However, if we apply the operation now, then at the later stage we will have another Operation 5 available, namely the one that removes precisely those pairs that form the difference.  Since this operation will remain applicable and therefore must eventually be applied, the end result will be the same.

\qed
\end{proof}

Moreover, the only way in which it is possible to obtain unsound derivations is by negated modal literals.

\begin{proposition}
For a knowledge producing definition in which each body of a rule contain only positive occurrences of modal literals $K\psi$, each knowledge derivation sequence is sound.
\end{proposition}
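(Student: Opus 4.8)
The plan is to reduce soundness to a monotonicity property of the derivation. First I would equip approximate knowledge structures with a precision order $\sqsubseteq$: set $((I,J),\app) \sqsubseteq ((I^*,J^*),\app^*)$ when $I \leq I^*$, $J^* \leq J$, and every pair in $\app^*$ refines (is more precise than) some pair in $\app$. Each of the five operations turns $\appks_i$ into a strictly $\sqsubseteq$-larger $\appks_{i+1}$ — Operations 1--2 enlarge an $I$, Operations 3--4 shrink a $J$, and Operation 5 deletes pairs from $\app$ — so a derivation sequence is a $\sqsubseteq$-increasing chain. Since the precondition of Operations 1, 2 and 5 is that some rule body $\phi$ satisfies $\appks_i \models \phi$, while the precondition of Operations 3 and 4 is a falsity condition $(\cdot) \not\models \phi$, soundness amounts to showing that these truth and falsity judgements are preserved as we move up the chain.

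The heart of the argument is a monotonicity lemma: if every modal literal in $\phi$ occurs positively, then $\appks \models \phi$ is $\sqsubseteq$-monotone. I would prove this by induction on $\phi$, using the switching semantics for negation. The only non-standard case is $K\psi$: because nesting of $K$ is forbidden, $\psi$ is an ordinary FO formula evaluated in the pairs of $\app$, and the FO(ID) monotonicity of the four-valued evaluation guarantees that $(I',J') \models \psi$ survives both refinement of $(I',J')$ (Operations 2, 4) and deletion of pairs (Operation 5). Hence a positive $K\psi$ stays true once true. This is exactly where positivity is essential: the dual literal $\lnot K\psi$ is \emph{not} monotone, since refining a pair $(I',J')$ to $(I',J'')$ with $J'' \leq J'$ can falsify $(J'',I') \models \psi$, which is the very mechanism behind the pathological $\{Kp \leftarrow \lnot Kp\}$. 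With the lemma in hand, the preconditions of Operations 1, 2 and 5 are immediately preserved.

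For the unfounded-set Operations 3 and 4 the precondition is a falsity judgement, and here I would exploit that negation is interpreted by switching the approximating pair. As the sequence climbs, the switched pair $(J,I)$ descends in precision ($J$ shrinks, $I$ grows), and since $\models \phi$ is monotone, falsity is preserved downwards; this reproduces, in the modal setting, the standard reason why unfounded sets remain unfounded in plain FO(ID).

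I expect the main obstacle to lie precisely in making this last step rigorous when $\phi$ contains positive modal literals. The knowledge component increases in two genuinely different ways — world deletion (Operation 5) and pair refinement (Operations 2, 4) — and these do not behave symmetrically under the switch used for negation, so the clean ``switching is antitone'' property that rescues the pair part in FO(ID) does not transfer verbatim to $\app$. Concretely, a positive $K\psi$ can rise from false to true as Operation 5 prunes worlds, which is exactly the behaviour that could resurrect an atom that an earlier unfounded-set operation had justified as false. Settling this is the crux: I would try to prove pair-monotonicity and knowledge-monotonicity separately — legitimate because atoms ignore $\app$ while modal literals ignore the top pair, so atoms cause no polarity clash in the knowledge-only order — and then recombine them. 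If this separation does not close the gap for Operations 3 and 4, it would indicate that the statement needs a further restriction (for instance, that knowledge is only produced about the predicates actually being defined, preventing such feedback into rule bodies), and isolating that restriction would itself be the real content of the result.
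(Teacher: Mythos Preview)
Your approach is essentially the paper's: introduce a precision order on approximate knowledge structures, observe that every derivation step climbs this order, and then argue that satisfaction of rule bodies is monotone in this order provided modal literals occur only positively. The paper's sketch does exactly this, listing the three ways $\appks_j$ can differ from $\appks_i$ (the top pair becomes more precise; pairs in $\app$ become more precise; pairs are deleted from $\app$) and concluding that ``$\app_j$ always knows everything that $\app_i$ knows,'' so that a body can only lose truth through a negative modal literal. Your monotonicity lemma is the same statement, proved the same way.

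Where you go further than the paper is in worrying about Operations~3 and~4. You are right that the argument ``bodies stay true'' only directly handles Operations 1, 2 and 5, whose preconditions are of the form $\appks_i\models\phi$. The unfounded-set operations have preconditions of the form $((J_{i+1},I_i),\app_i)\not\models\phi$, and you correctly observe that a \emph{positive} $K\psi$ in $\phi$ can flip from false to true as $\app$ is pruned, so the falsity judgement need not persist. The paper's sketch simply does not address this; it stops at the monotonicity of truth and does not discuss the unfounded-set side at all. Your final paragraph therefore identifies a genuine gap that is present in the paper's own proof sketch as well, not a defect peculiar to your plan. In short: your proposal matches the paper's argument where the paper gives one, and the obstacle you isolate for Operations~3 and~4 is real and is left open by the paper too.
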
 
\begin{proof}[sketch]
The differences between one approximate knowledge structure $\appks_i = ((I_i,J_i),\app_i)$ and an approximate knowledge structure $\appks_j = ((I_j,J_j),\app_j)$ that occurs later in the derivation sequence are:
\begin{itemize}
\item $I_i \leq I_j$ and $J_j \leq J_i$: this implies that whenever $((I_i,J_i),\app)\models \phi$ for some $\app$ and $\phi$, also $((I_j,J_j),\app)\models \phi$
\item $\app_j$ consists of pairs $(I_j',J_j')$ for which there exists a corresponding pair $(I_i',J_i')\in\app_i$ such that, again, $I_i' \leq I_j'$ and $J_j' \leq J_i'$, and therefore whenever $((I_i',J_i'),\app) \models\phi$ also $((I_j',J_j'),\app) \models\phi$
\item for some $(I_j',J_j') \in \app_j$, there may not exist a corresponding pair in $\app_i$ 
\end{itemize}
Putting the second and third point together, it is obvious that $\app_j$ always knows everything that $\app_i$ knows, and possibly more.  Therefore, only rule bodies containing a negative occurrence of a modal literal may becomes false in $\app_j$ after they were true in $\app_i$.  

\qed
\end{proof}

This result seems to suggest that it might be possible to impose syntactic constraints on a knowledge producing definition to ensure that it only has sound derivations, by limiting the way in which negated modal literals are allowed to appear. However, it is not enough to, e.g., just require that these definitions are stratified, because gaining new knowledge about one predicate may have ``side effects'' where also knowledge about another predicate is produced.  For instance, consider the following definition which, at first sight, contains no cycles at all:
\[
\left\{
\begin{aligned}
q & \leftarrow p.\\
K q & \leftarrow r.\\
r & \leftarrow \lnot Kp.
\end{aligned} 
\right\}
\]
Even though there are no syntactic cycles, once the agent learns $q$, this will also produce the knowledge that $p$ (since $q$ is only true in worlds where $p$ also holds).  Because of the  $\lnot K p$ in the body of the rule for $r$, this will lead to an unsound derivation.

Our approach in this paper will be to ignore the existence of unsound derivations and view the unique limit of the sound derivations as the semantics of a knowledge producing definition---at least, if this unique limit is total.  If the limit is not total (given a particular knowledge structure $\ks$ for the open predicates), then, just like in FO(ID), the knowledge producing definition simply has no models (for that particular $\ks$).   If no sound derivations exists (for that $\ks$), then again the knowledge producing definition has no models (for that $\ks$).  

\section{Adding knowledge producing definitions to FO}

The previous section defined the concept of a knowledge producing definition in isolation. Of course, our goal is to add this construct to FO(K), in the same way as FO(ID) has added inductive definitions to FO.  We will again consider only theories of the form $T = \{\phi,\kappa\}$, were $\phi$ is now an FO(K) formula and $\kappa$ a knowledge producing definition.  We define that a consistent knowledge structure $\ks = (I,W)$ is a {\em weak model} of $T$ if $\ks\models \phi$ and the approximate knowledge structure $((I,I), \{ (J,J) \mid J \in W\})$ is the unique total limit of each sound derivation sequence that starts from $(I\rvert_{Op(\kappa)},\{L\rvert_{Op(\kappa)}\mid L\in W\})$.

A problem with these weak models is they might contain knowledge that is not warranted.  Consider, for instance, the following example:
\begin{equation*}
\left\{
q \leftarrow Kp
\right\}.
\end{equation*}
Here, there is no reason at all for knowing $p$, yet this knowledge producing definition has a weak model $(\{p,q\}, \{ \{p,q\}\})$.  To avoid such models, we introduce the following concept. 

\begin{definition}
Let $T = \{\phi,\kappa\}$, were $\phi$ is an FO(K) formula and $\kappa$ a knowledge producing definition.  A knowledge structure $\ks = (I,W)$ is a {\em strong model} of $T$, denoted $\ks \models T$, if $\ks\models \phi$ and the approximate knowledge structure $((I,I), \{ (J,J) \mid J \in W\})$ is the unique total limit of each sound derivation sequence that starts from $(I\rvert_{Op(\kappa)}, O)$ where $O$ is the set of all $L\rvert_{Op(\kappa)}$ such that there exists a $W'$ for which $(L,W')$ is a weak model of $T$.
\end{definition} 

By always using the set $O$ as the set of possible worlds for the open predicates, this definition prevents the knowledge producing definition from arbitrarily knowing more about its open predicates than it should.

\section{Examples}

In the logic we have now defined, it is straightforward to represent the $Interview$ example.
\begin{gather*}
\left\{
\begin{aligned}
Eligible(x) &\leftarrow (HighGPA(x) \lor (Minority(x) \land FairGPA(x)).\\
Interview(x) & \leftarrow \lnot K~Eligible(x) \land \lnot K~\lnot Eligible(x).
\end{aligned}
\right\}\\
\forall x\ HighGPA(x) \Leftrightarrow x =Mary.\\
\forall x\ FairGPA(x) \Leftrightarrow x =John.\\
Minority(Mary).
\end{gather*}
Here, the soundness condition on the derivation sequence means that we first have to apply the first rule of the knowledge producing definition to exhaustion, before starting with the second rule. There are two possible interpretations for the open predicates of the definition that satisfy the FO part of the theory:
\begin{gather*}
O_1 = \{ HighGPA(Mary) , FairGPA(John) , Minority(Mary) \},\\
O_2 = \{ HighGPA(Mary) , FairGPA(John) , Minority(Mary) , Minority(John)  \}.
\end{gather*}
Each strong model will therefore have to be produced by a derivation sequence in which $W_0 = \{ O_1,O_2\}$ is used as the agent's knowledge about the open predicates. Let us first consider a sound derivation sequence that interprets the open predicates by the knowledge structure $\ks_1 = (O_1,W_0)$. The first step is $\appks_0 = (A, \{A,B\})$ with:
\begin{align*}
A &= (O_1 + \bot_{Def(\kappa)}, O_1 +\top_{Def(\kappa)})\\ 
B &= (O_2 + \bot_{Def(\kappa)}, O_2 +\top_{Def(\kappa)}) 
\end{align*}
In the approximating pair $A$, we can apply the first rule of the knowledge producing definition to derive that $Eligible(Mary)$, and we can also apply an unfounded set operation to derive that $Eligible(John)$ is false. This leads to the new approximating pair (we abbreviate the names of the predicates):
\begin{multline*}
A' = (O_1+\{Elig(Mary)\}, O_1+\{Elig(Mary), Int(Mary),Int(John)\})
\end{multline*}
In the approximating pair $B$, on the other hand, we can derive that both $Eligible(John)$ and $Eligible(Mary)$.  This leads to the new pair:
\begin{multline*}
B' = (O_2+\{Elig(Mary),Elig(John)\}, \\O_2+\{Elig(Mary),Elig(John), Int(Mary),Int(John)\})
\end{multline*}
After applying these six operations (two times two for $A$, and two for $B$) to the approximate knowledge structure $\app_0$, we will therefore eventually end up in
\[
\app_6 = (A', \{A',B'\})
\]
Now, $\{A',B'\} \models K~Elig(Mary)$, because $Elig(Mary)$ holds in both underestimates $O_1+\{Elig(Mary)\}$ and $O_2+\{Elig(Mary),Elig(John)\}$.  Also, $\{A',B'\}\models \lnot K~Elig(John)$, since $Elig(John)$ does not hold in the overestimate (the negation switches the pairs)  of $A'$, namely $O_1+\{Elig(Mary), Int(Mary),Int(John)\}$.  Finally, $\{A',B'\}\models \lnot K~\lnot Elig(John)$ holds as well, since $Elig(John)$ does not belong to the underestimate (the two negations switch the pairs twice) of $A'$, namely $O_1+\{Elig(Mary)\}$.  Therefore, we can apply to both $A'$ and $B'$ an unfounded set operation (Operations 3 and 4) to derive that $Mary$ should not be interviewed, and we can apply the last rule of the definition to derive that $John$ should.  After six more steps (two times two for $A'$, and two for $B'$), we therefore end up in the limit
$\app_9 = (A'', \{A'', B''\})$, where
\begin{align*}
A'' =& (I,I) \text{ with } I = O_1+\{Elig(Mary), Int(John)\}\\
B'' =& (J,J) \text{ with } J = O_1+\{Elig(Mary), Elig(John), Int(John)\}
\end{align*}
This structure is total and therefore the single knowledge structure $(I,\{I,J\})$ that it approximates is a model of this theory. By a similar reasoning, there is also a second model, namely $(J,\{I,J\})$.

\subsection{Another version of the $Interview$ example} 

In the above version of this example, we are basically already encoding the solution to the problem by ordering the agent to interview everyone whose eligibility is not known.  Using a knowledge producing definition, however, it is also possible to let the semantics do more of the work.

\begin{gather*}
\left\{
\begin{aligned}
Eligible(x) &\leftarrow (HighGPA(x) \lor (Minority(x) \land FairGPA(x)).\\
K~Minority(x) & \leftarrow Interview(x) \land Minority(x).\\
K~\lnot Minority(x) & \leftarrow Interview(x) \land \lnot Minority(x).
\end{aligned}
\right\}\\
(\forall x\ HighGPA(x) \Leftrightarrow x =Mary) \land (\forall x\ FairGPA(x) \Leftrightarrow x =John) \land
Minority(Mary).\\
\forall x\ K\,Eligible(x) \lor K\,\lnot Eligible(x).
\end{gather*}

Here, we are just telling the agent that the action of interviewing a candidate will reveal his minority status, without explicitly saying who should be interviewed.  The FO(K) constraint then orders the agent to make sure that for each candidate, it is known whether he is eligible of not.  Generating models for this theory will then correctly produce plans in which people whose minority status is unknown will be interviewed.

\subsection{Sensing actions}

A successful application area of ASP is planning.  This also falls naturally in the scope of FO(ID), since theories in the situation or event calculus are essentially just an inductive definition of the values of the fluents at different points in time \cite{DeneckerT07}.  Here is an example of a theory in FO(ID) that represents a small action domain in which there is a dirty glass that can be cleaned by wiping it.

\begin{gather*}
\left\{
\begin{aligned}
Clean(t+1) & \leftarrow Wipe(t) \lor Clean(t).\\
Clean(0) &\leftarrow InitClean.
\end{aligned}
\right\}\\
\lnot InitClean
\end{gather*}

If the agent now does not know whether the glass is initially clean, we may be interested in finding a plan that will allow it to know with certainty that it will be clean at a certain  point in time. This can be accomplished by just adding, e.g., $K~Clean(2)$ as an FO(K) constraint to the theory.  More interestingly, knowledge producing definitions can also be used to add {\em sensing} actions, such as an action {\em Inspect} that allows the agent to discover whether the glass is clean.

\begin{gather*}
\left\{
\begin{aligned}
Clean(t+1) & \leftarrow Wipe(t) \lor Clean(t).\\
Clean(0) &\leftarrow InitClean.\\
K~Clean(t+1) & \leftarrow Inspect \land Clean(t).\\
K~\lnot Clean(t+1) & \leftarrow Inspect \land \lnot Clean(t).
\end{aligned}
\right\}\\
K~Clean(2).
\end{gather*}

\section{Discussion}

ASP is able to express epistemic examples by interpreting an answer set as a set of literals that are {\em believed} by a rational agent.  This is one of the most radical ways in which ASP departs from classical logic, in which models or interpretations always represent the {\em objective} state of the world.  In the classical setting, one typically resorts to sets of interpretations (or the related concept of a Kripke structure) to represent beliefs.  

In order to relate this aspect of ASP to classical logic, or to even integrate the two, it is necessary to construct a formalisation which sticks to these classical semantics objects.  In this paper, we have introduced knowledge     producing definitions for this purpose.  As our examples have shown, these are able to mimic the ASP representation of, e.g., the \textit{Interview} example, while at the same time also introducing some interesting new possibilities, such as the ability to distinguish between some atom $P(\vec{t})$ becoming objectively true (by having $P(\vec{t})$ in the head of a rule) and the agent learning this atom (by having $K~P(\vec{t})$ in the head).

\end{document}